\newtheorem{lemma}{Lemma}
\newtheorem{theorem}{Theorem}
\theoremstyle{definition}
\newtheorem{definition}{Definition}
\crefname{lemma}{Lemma}{Lemmas}
\Crefname{lemma}{Lemma}{Lemmas}
\crefname{theorem}{Theorem}{Theorems}
\Crefname{theorem}{Theorem}{Theorems}
\crefname{corollary}{Corollary}{Corollaries}
\Crefname{corollary}{Corollary}{Corollaries}
\crefname{observation}{Observation}{Observations}
\Crefname{observation}{Observation}{Observations}
\crefname{definition}{Definition}{Definitions}
\Crefname{definition}{Definition}{Definitions}
\crefname{section}{Section}{Sections}
\Crefname{section}{Section}{Sections}
\crefname{figure}{Figure}{Figures}
\Crefname{figure}{Figure}{Figures}
\renewcommand{\subset}{\subseteq}
\newcommand{\abs}[1]{\left | #1 \right |}
\newcommand{\set}[1]{\left \{ #1 \right \}}
\newcommand{\Oh}[1]{O\!\left(#1\right)}
\newcommand{\Omegah}[1]{\Omega\!\left(#1\right)}
\newcommand{\Otildeh}[1]{\tilde{O}\!\left(#1\right)}
\newcommand{\Thetah}[1]{\Theta\!\left(#1\right)}
\newcommand{\poly}[0]{\operatorname{poly}}
\definecolor{shade}{RGB}{235,235,235}
\title{Additive Spanners and Distance Oracles in Quadratic Time}
\author{Mathias Bæk Tejs Knudsen\thanks{
	Research partly supported by
	Advanced Grant DFF-0602-02499B from the Danish Council for Independent Research
	under the Sapere Aude research career programme
	and by the FNU project AlgoDisc - Discrete Mathematics, Algorithms, and Data Structures}}
\affil{University of Copenhagen,\\
    \tt{mathias@tejs.dk}
}
\date{}
\begin{document}
\maketitle
\begin{abstract}
Let $G$ be an unweighted, undirected graph.
An additive $k$-spanner of $G$ is a subgraph $H$ that approximates
all distances between pairs of nodes up to an additive error of $+k$, that is, it satisfies
$d_H(u,v) \le d_G(u,v)+k$ for all nodes $u,v$, where $d$ is the shortest path distance.
We give a deterministic algorithm that constructs an additive $\Oh{1}$-spanner with $\Oh{n^{4/3}}$ edges in
$\Oh{n^2}$ time. This should be compared with the randomized Monte Carlo algorithm by Woodruff [ICALP 2010]
giving an additive $6$-spanner with $\Oh{n^{4/3}\log^3 n}$ edges in expected time $\Oh{n^2\log^2 n}$.

An $(\alpha,\beta)$-approximate distance oracle for $G$ is a data structure that supports the
following distance queries between pairs of nodes in $G$. Given two nodes
$u$, $v$ it can in constant time compute a distance estimate $\tilde{d}$ that satisfies
$d \le \tilde{d} \le \alpha d + \beta$ where $d$ is the distance between $u$ and $v$ in
$G$.
Sommer [ICALP 2016] gave a randomized Monte Carlo $(2,1)$-distance oracle of size $\Oh{n^{5/3}\poly\log n}$
in expected time $\Oh{n^2\poly\log n}$. As an application of the additive $\Oh{1}$-spanner we improve the construction
by Sommer [ICALP 2016] and
give a Las Vegas $(2,1)$-distance oracle of size $\Oh{n^{5/3}}$ in time $\Oh{n^2}$. This also implies an
algorithm that in $\Oh{n^2}$ gives approximate distance for all pairs of
nodes in $G$ improving on the $\Oh{n^2 \log n}$ algorithm by Baswana and Kavitha [SICOMP 2010].
\end{abstract}

\thispagestyle{empty}

\newpage
\setcounter{page}{1}
\section{Introduction}


Let $G = (V,E)$ be an unweighted, undirected graph on $n$ nodes and $m$ edges. A subgraph $H$ of
$G$ is an \emph{additive $k$-spanner} if the following holds for every pair $u,v$ of nodes in
$G$:
\begin{align}
	\notag
	d_H(u,v) \le d_G(u,v) + k
	\, ,
\end{align}
where $d_H(u,v)$ and $d_G(u,v)$ is the distance between $u$ and $v$ in $H$ and $G$ respectively.
This paper will only consider additive spanners and not multiplicative or mixed spanners, so we
will simply say that $H$ is a \emph{$k$-spanner} when we mean that $H$ is an additive $k$-spanner.

In this paper we consider algorithms constructing $k$-spanners, and there are therefore three
interesting parameters: The distortion $k$, the running time of the algorithm, and the size of the
spanner created.
Elkin and Peleg \cite{elkin01spanners} showed
how to construct $2$-spanners with $\Oh{n^{3/2}}$ edges in $\Oh{n^{5/2}}$ time, and
Baswana et al \cite{baswana10spanners} gave an algorithm that constructs $6$-spanners with $\Oh{n^{4/3}}$
edges in $\Oh{n^{2/3}m}$ time.

The running time of these algorithms can be improved if we allow the $k$-spanners to be larger by
a $\poly \log n$ factor. Dor, Halperin and Zwick \cite{dor00apsp} showed
that we can construct $2$-spanners with $O\!\left(n^{3/2}\log^{1/2} n\right)$ edges in 
$O\!\left(n^2\log^{2}n\right)$ time, and Woodruff \cite{woodruff10nearlyquadratic} gave an algorithm to construct $6$-spanners
with $O\!\left(n^{4/3}\log^3 n\right)$ edges in $O\!\left(n^2\log^2 n\right)$ time. The construction of
Woodruff is furthermore randomized Monte Carlo. These results are summarized in
Table \ref{tab:summaryTime}.

These improvements to the running time fit into the following paradigm: For a fixed $k$ the authors
find algorithms that produce spanners that are \emph{almost} as small as the best known construction of
$k$-spanners and have \emph{near}-quadratic running time. We reverse this way of looking at the
problem. We are now trying to find algorithms that yield $k$-spanners
that are exactly as small as the best known constructions for any $k=O(1)$, i.e. $O(n^{4/3})$, and at the same
time we want the algorithm to run as fast as possible. All known algorithms for creating $O(1)$-spanners that
have close to optimal size run in time $\Omega(n^2)$.
\footnote{For instance the algorithm by Baswana et al \cite{baswana10spanners} gives a $6$-spanner with
$\Oh{n^{4/3}}$ edges and is therefore only interesting when $m = \Omegah{n^{4/3}}$, in which case the
running time is $\Thetah{n^{2/3}m} = \Omegah{n^2}$.}
So a natural question is to ask if there exists a $k=O(1)$ and an algorithm that constructs a $k$-spanner
with $\Oh{n^{4/3}}$ edges in $\Oh{n^2}$ time. In fact Sommer \cite{sommer2016distanceOracle} mentioned at his talk at
ICALP 2016 that the main obstacle towards getting a better running time for constructing the distance
oracle he presented is the lack of such an algorithm. In his case the distortion $k = O(1)$ is only factored
into the running time and not the distortion of oracle. Therefore, it does not matter what $k$ is as long as it
is constant. 

We show that it possible to attain this goal by giving an algorithm that
constructs $8$-spanners deterministically with $\Oh{n^{4/3}}$ edges in $\Oh{n^2}$ time.
Comparing this with the algorithm by Woodruff \cite{woodruff10nearlyquadratic}
this gets rid of the $\log^3 n$ factor
on the number of edges and a factor of $\log^2 n$ in the running time. Furthermore, the algorithm
is deterministic and not randomized Monte Carlo.
The price of these improvements is that the distortion is larger than $6$. We note that there
are no lower bounds ruling out the possibility of a $4$-spanner with $\Oh{n^{4/3}}$ edges.
For the application to the distance oracle by Sommer \cite{sommer2016distanceOracle},
the distortion is unimportant as long as it is constant. We also
show how to construct $2$-spanners with $\Oh{n^{3/2}}$ edges in $\Oh{n^2}$ time.
For a comparison to previous work see Table \ref{tab:summaryTime}.

\begin{table}[!h]
	\begin{center}
		\begin{tabular}{l|l|l|l|l}
			$k$ & \text{Number of Edges} & \text{Running Time} & \text{Comment} & \text{Reference} \\ \hline
			$2$ & $\Oh{n^{3/2}}$ & $\Oh{n^{5/2}}$ & Deterministic & \cite{elkin01spanners} \\
			$2$ & $\Oh{n^{3/2}\log^{1/2} n}$ & $\Oh{n^2\log^{2} n}$ & Deterministic & \cite{dor00apsp} \\
			$2$ & $\Oh{n^{3/2}}$ & $\Oh{n^2}$ & Deterministic & Theorem \ref{thm:twospanner} \\
			\hline
			$6$ & $\Oh{n^{4/3}}$ & $\Oh{n^{2/3}m}$ & Deterministic & \cite{baswana10spanners} \\
			$6$ & $\Oh{n^{4/3}\log^{3} n}$ & $\Oh{n^2\log^{2} n}$ & Randomized Monte Carlo & \cite{woodruff10nearlyquadratic} \\
			$8$ & $\Oh{n^{4/3}}$ & $\Oh{n^2}$ & Deterministic & Theorem \ref{thm:eightspanner} \\
		\end{tabular}
	\end{center}
	\caption{A summary of the performance of selected algorithms that creates a $k$-spanner $H$ from a graph
	on $n$ nodes. It shows the additive distortion, $k$, and an upper bound on the number of edges in $H$ as
	well as the running time of the algorithm that constructs $H$.}
	\label{tab:summaryTime}
\end{table}

\subparagraph{Related work}

Elkin and Peleg \cite{elkin01spanners} showed that\footnote{
Aingworth et al \cite{aingworth99fastestimationshortestpath} earlier showed the same
result up to logarithmic factors on the size of the spanner.} any graph on $n$ nodes has a $2$-spanner with $O(n^{3/2})$ edges,
Chechik \cite{chechik13spanners} showed that it has a $4$-spanner with $O\!\left(n^{7/5}\log^{1/5} n\right)$ edges, and
Baswana et al \cite{baswana10spanners} showed that it has a $6$-spanner with $O(n^{4/3})$ edges.
These results are complemented by a negative result of Abboud and Bodwin \cite{abboud16tightexponent}. A consequence of their
result is that for any $k = O(1)$ there exists a graph on $n$ nodes such that any $k$-spanner of this
graph has at least $n^{4/3-o(1)}$ edges.

Another negative result comes from Erd\H{o}s's girth conjecture \cite{erdos1964extremal}. It states that for any constant $k$
there exists graphs with $n$ nodes and $\Omega\!\left(n^{1+1/k}\right)$ edges where the girth is
$2k+2$. This conjecture has been proved for $k=2,3,5$ \cite{wenger1991extremal,benson1966minimal}. In particular
if the conjecture is true this implies that there exists graphs for which any $(2k-1)$-spanner must have at least
$\Omega\!\left(n^{1+1/k}\right)$ edges.
Woodruff \cite{woodruff06spanners} proved that whether the conjecture is true or not, there exists a graph on $n$
nodes such that any $(2k-1)$-spanner of the graph has at least $\Omega\!\left(k^{-1}n^{1+1/k}\right)$ edges.

There are also upper and lower bounds when we allow the distortion $k$ to depend on $n$,
see
\cite{bodwinW2015verySparse,
bodwinW2016betterDistance,
chechik13spanners,
pettie2009lowDistortion}. In this paper, however, we are only interested in the case where $k = O(1)$.
The upper and lower bounds for $k = O(1)$ are summarized in Table \ref{tab:summaryExistence}.

\begin{table}[!h]
	\begin{center}
		\begin{tabular}{l|l|l|l}
			$k$ & \text{Upper Bound} & \text{Lower Bound} & \text{Reference} \\ \hline
			$2$ \& $3$ & $O\!\left(n^{3/2}\right)$ & $\Omega\!\left(n^{3/2}\right)$ &
				\cite{elkin01spanners}/\cite{wenger1991extremal} \\
			$4$ \& $5$ & $O\!\left(n^{7/5}\log^{1/5} n\right)$ & $\Omega\!\left(n^{4/3}\right)$ &
				\cite{chechik13spanners}/\cite{benson1966minimal} \\
			$\ge 6$ & $O\!\left(n^{4/3}\right)$ & $n^{4/3-o(1)}$ &
				\cite{baswana10spanners}/\cite{abboud16tightexponent} \\
		\end{tabular}
	\end{center}
	\caption{For a given $k$ an upper bound of $f(n)$ is a proof that any graph on $n$ nodes has a
	$k$-spanner with no more than $f(n)$ edges. A lower bound of $g(n)$ is a proof that there exists
	a graph on $n$ nodes for which any $k$-spanner must have at least $g(n)$ edges.}
	\label{tab:summaryExistence}
\end{table}

\subparagraph{Techniques}

Previous algorithms that construct $k$-spanners in $\Otildeh{n^2}$ time all relied on
constructing a hitting set for some set of neighbourhoods. In \cite{dor00apsp} this is done deterministically
via a dominating set algorithm, and in \cite{woodruff10nearlyquadratic} this is done via sampling. This approach
will inherently come with the cost of a $\poly \log n$ factor. Furthermore, in the construction of $6$-spanners
by Woodruff \cite{woodruff10nearlyquadratic} the number of neighbourhoods that need to be hit is so large that it seems
impossible with current techniques to modify the algorithm to be Las Vegas. Too avoid this we instead use a
clustering approach described in Section \ref{sec:clustering}. The algorithm in Theorem
\ref{thm:eightspanner} is obtained using this clustering and a careful modification of the 
path-buying algorithm of \cite{baswana10spanners}.

\subparagraph{Approximate Distance Oracles and All Pairs Almost Shortest Paths}

Given an undirected an unweighted graph $G$ an $(\alpha,\beta)$-approximate distance
oracle for $G$ is a data structure that supports the following query. Given two nodes
$u$, $v$ it can compute a distance estimate $\tilde{d}$ that satisfies
$d \le \tilde{d} \le \alpha d + \beta$ where $d$ is the distance between $u$ and $v$ in
$G$. For work on approximate distance oracles see e.g.
\cite{abrahamG2014onApproximateDistanceLabels,
agarwal2014theSpaceStretchTime,
agarwal2013aSimpleStretch,
baswanaGSU08distanceOracles,
baswanaGS09allPairs,
baswana2010faster,
baswana2006approximate,
berman2007faster,
chechik2014approximate,
chechik2015approximate,
patrascu2014distance,
patrascu2012new,
roditty2005deterministic,
sommer2014shortest,
sommer2009distance,
thorup2005approximate,
wulff2012approximate}.
Sommer \cite{sommer2016distanceOracle} gave a randomized Monte Carlo $(2,1)$-distance oracle that can be constructed in 
$\Oh{n^2 \poly \log n}$ time, has size $\Oh{n^{5/3} \poly \log n}$ and can answer queries
in $\Oh{1}$ time. We improve the construction time and the size to $\Oh{n^2}$ and $\Oh{n^{5/3}}$
respectively, and our construction is randomized Las Vegas.
As a corollary we can compute an estimate $\tilde{d}(u,v)$ for all pairs of nodes in $G$ satisfying
$d_G(u,v) \le \tilde{d}(u,v) \le 2d_G(u,v)+1$ in time $\Oh{n^2}$. This improves upon the 
$\Oh{n^2\log n}$ algorithm by Baswana and Kavitha \cite{baswana2010faster}.


\subparagraph{Preliminaries}

For a graph $G$ and two nodes $u,v$ we denote the distance from $u$ to $v$ in $G$ by $d_G(u,v)$. All
graphs considered in this paper are unweighted, and unless otherwise specified they are undirected
as well. For an undirected graph $G$ an a node $u$ the neighbourhood of $u$ is the set of nodes adjacent
to $u$ and is denoted by $\Gamma_G(u)$.

\subparagraph{Overview}

In Section \ref{sec:clustering} we introduce the clustering we use when constructing
the spanners. In Section \ref{sec:construction} we show how to create an $8$-spanner with 
$\Oh{n^{4/3}}$ edges in $\Oh{n^2}$ time and thereby prove Theorem \ref{thm:eightspanner}.
In Section \ref{sec:distOracle} we provide the details on how to give an improved 
$(2,1)$-distance oracle.

\section{Clustering}
\label{sec:clustering}

Our construction of additive spanners uses standard clustering techniques.
We present our clustering framework below.
Let $G = (V,E)$ be a graph with $n$ vertices and $m$ edges. We let $t$ be a
parameter that can depend on $G$. For a sequence $u_1,\ldots,u_\ell$ of nodes
we define the clusters $C_i, i \in \set{1,\ldots,\ell}$ by
\begin{align*}
	C_i = \left ( \Gamma_G(u_i) \cup \set{u_i} \right ) \setminus \left ( C_1 \cup \ldots \cup C_{i-1} \right )
	\, .
\end{align*}
Furthermore we also define graphs $G_0,G_1,\ldots,G_\ell$ in the following way.
We let $G_0 = G$, and for $i > 0$ we let $G_i$ be the subgraph of $G$ that contains an edge
$(u,v)$ if not both $u$ and $v$ are contained in $C_1 \cup \ldots \cup C_i$.
From each node $u_i$ we let $T_i$ be a BFS tree in $G_{i-1}$ rooted at $u_i$.

\begin{definition}
	\label{def:clustering}
	A sequence $u_1,\ldots,u_\ell$ is called a \emph{$t$-clustering} if the following
	requirements are satisfied.
	\begin{itemize}
		\item The node $u_i$ maximizes
			$\left ( \Gamma_G(u_i) \cup \set{u_i} \right ) \setminus \left ( C_1 \cup \ldots \cup C_{i-1} \right )$.
		\item Every cluster $C_i$ contains at least $t$ nodes.
		\item For every node $v$ we have
				$\abs{\left ( \Gamma_G(v) \cup \set{v} \right ) \setminus \left ( C_1 \cup \ldots \cup C_{\ell} \right )} < t$.

	\end{itemize}
\end{definition}

We say that a node $v$ is \emph{clustered} if $v \in C_1 \cup \ldots \cup C_\ell$
and \emph{unclustered} otherwise. We note that since every cluster $C_i$ contains
at least $t$ nodes and the clusters are disjoint we have $\ell \le \frac{n}{t}$.

\begin{lemma}
	\label{lem:fewedgesleft}
	Let $u_1,\ldots,u_\ell$ be a $t$-clustering. Then the number of edges in $G_\ell$ is at most $nt$.
\end{lemma}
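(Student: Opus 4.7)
The plan is to bound $|E(G_\ell)|$ directly from the third condition in \cref{def:clustering}, which is exactly the condition that controls how many unclustered vertices sit around any given node. Write $C = C_1 \cup \ldots \cup C_\ell$ and let $U = V \setminus C$ be the set of unclustered nodes. By the definition of $G_\ell$, an edge $(u,v) \in E$ lies in $G_\ell$ iff at least one of its endpoints lies in $U$.

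First I would observe that the third bullet of \cref{def:clustering} implies the pointwise bound $|\Gamma_G(v) \cap U| < t$ for every $v \in V$, since $\Gamma_G(v) \cap U \subset (\Gamma_G(v) \cup \set{v}) \setminus C$. Summing this over all vertices gives $\sum_{v \in V} |\Gamma_G(v) \cap U| < nt$.

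Next I would rewrite the same sum by switching the order of summation over edges:
\begin{align*}
	\sum_{v \in V} |\Gamma_G(v) \cap U| = \sum_{(u,v) \in E} \bigl( \mathbf{1}[u \in U] + \mathbf{1}[v \in U] \bigr).
\end{align*}
Every edge of $G_\ell$ contributes at least $1$ to the right-hand side (edges with both endpoints in $U$ contribute $2$, edges with exactly one endpoint in $U$ contribute $1$, and no other edges lie in $G_\ell$). Hence $|E(G_\ell)| \le \sum_{v \in V} |\Gamma_G(v) \cap U| < nt$, which gives the desired inequality.

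There is essentially no obstacle: the entire content is the double-counting step, and the only small care needed is to notice that the third condition gives a uniform bound over \emph{all} vertices (not just unclustered ones), which is what allows us to avoid losing a factor of $2$ from double-counting edges with both endpoints unclustered.
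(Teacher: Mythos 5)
Your proof is correct and is essentially the same argument the paper gives, just spelled out in more detail: the paper also bounds $|E(G_\ell)|$ by $\sum_{v\in V}|\Gamma_G(v)\setminus(C_1\cup\ldots\cup C_\ell)|$ and observes this sum is less than $nt$ by the third condition of the clustering definition. Your explicit double-counting step and the remark about why the sum dominates $|E(G_\ell)|$ merely fill in the ``clearly'' in the paper's one-line proof.
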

\begin{proof}
The number of edges in $G_\ell$ is bounded by the sum
$\sum_{v \in V} \abs{\left ( \Gamma_G(v) \right ) \setminus \left ( C_1 \cup \ldots \cup C_{\ell} \right )}$,
which is clearly less than $nt$.
\end{proof}

\begin{lemma}
	\label{lem:clusterOnSp}
	Let $u_1,\ldots,u_\ell$ be a $t$-clustering of $G = (V,E)$ and let $u,v \in V$
	be a pair of nodes. Assume that some shortest path from $u$ to $v$ in $G$ is not
	contained in $G_\ell$ from Lemma \ref{lem:fewedgesleft}.
	Then there exists an index $i \in \set{1,2,\ldots,\ell}$ such that 
	\begin{align*}
		d_{T_i}(u_i,u) + d_{T_i}(u_i,v) \le d_G(u,v) + 2
		\, .
	\end{align*}
\end{lemma}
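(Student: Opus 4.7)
The plan is to exploit the minimality of some cluster index $i$ along a shortest path to simultaneously control the BFS distances from $u_i$ to both $u$ and $v$.

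First, I would pick a shortest $u$-$v$ path $P: u = w_0, w_1, \ldots, w_k = v$ in $G$ that is not contained in $G_\ell$. By definition of $G_\ell$, this means some edge $(w_j, w_{j+1})$ on $P$ has both endpoints in $C_1 \cup \cdots \cup C_\ell$. I then let $i$ be the smallest index such that some edge $(w_j, w_{j+1})$ of $P$ has both endpoints in $C_1 \cup \cdots \cup C_i$. By minimality of $i$, at least one of $w_j, w_{j+1}$ must lie in $C_i$ itself; without loss of generality assume $w_j \in C_i$ (the symmetric case is identical after relabeling).

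Next, I would collect the following observations about the graph $G_{i-1}$. By the minimality of $i$, every edge of $P$ other than possibly $(w_j, w_{j+1})$ has at most one endpoint in $C_1 \cup \cdots \cup C_{i-1}$, hence lies in $G_{i-1}$. The edge $(w_j, w_{j+1})$ itself is also in $G_{i-1}$ since $w_j \in C_i$ is not in $C_1 \cup \cdots \cup C_{i-1}$. Finally, since $w_j \in C_i \subseteq \Gamma_G(u_i) \cup \{u_i\}$, either $w_j = u_i$ or $(u_i, w_j)$ is an edge of $G$; in the latter case it lies in $G_{i-1}$ because $w_j \notin C_1 \cup \cdots \cup C_{i-1}$. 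Either way, $d_{G_{i-1}}(u_i, w_j) \le 1$.

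Combining these, walking along $P$ from $u$ to $w_j$ gives $d_{G_{i-1}}(u, w_j) \le j$, so $d_{G_{i-1}}(u_i, u) \le 1 + j$; walking along $P$ from $w_j$ through $(w_j, w_{j+1})$ to $v$ gives $d_{G_{i-1}}(w_j, v) \le k - j$, so $d_{G_{i-1}}(u_i, v) \le 1 + (k-j)$. Since $T_i$ is a BFS tree of $G_{i-1}$ rooted at $u_i$, tree distances in $T_i$ equal graph distances in $G_{i-1}$ from $u_i$, and adding the two bounds yields
\begin{align*}
	d_{T_i}(u_i, u) + d_{T_i}(u_i, v) \le (1+j) + (1 + k - j) = k + 2 = d_G(u, v) + 2,
\end{align*}
as required.

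The only subtle step is the choice of $i$ and the direction of the argument: one must pick $i$ small enough that \emph{both} the prefix $u,\ldots,w_j$ and the suffix $w_{j+1},\ldots,v$ of $P$ survive in $G_{i-1}$, while still having $w_j$ adjacent to $u_i$. This is exactly what minimality of $i$ guarantees, and it is the main (and essentially only) conceptual obstacle in the proof.
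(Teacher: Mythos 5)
Your proof is correct and follows essentially the same approach as the paper: pick a minimal cluster index $i$ that interacts with the shortest path, argue the path survives in $G_{i-1}$ by minimality, note $u_i$ is within distance $1$ of a path node in $C_i$, and apply the triangle inequality. The only cosmetic difference is that you choose $i$ minimal such that some path edge has both endpoints in $C_1\cup\cdots\cup C_i$, whereas the paper picks the smallest cluster index containing any path node; both choices yield the same two facts that drive the bound.
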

\begin{proof}
Consider a shortest path $p$ from $u$ to $v$ that is not contained in $G_\ell$ and let $w$
be a clustered node on $p$ such that $w \in C_i$. We choose $w$ such that $i$ is smallest possible.
By choosing $i$ smallest possible $p$ is contained in $G_{i-1}$. Furthermore since the
distance from $w$ to $u_i$ is at most $1$ we see that
\begin{align*}
	d_{G_{i-1}}(u_i,u) + d_{G_{i-1}}(u_i,v) \le 
	d_{G_{i-1}}(w,u) + d_{G_{i-1}}(w,v) + 2 =
	d_G(u,v) + 2
	\, .
\end{align*}
Since $T_i$ is a is shortest path tree in $G_{i-1}$ the conclusion follows.
\end{proof}

\begin{lemma}
	\label{lem:fastConstruction}
	Given a graph $G$ and a parameter $t > 0$ we can construct a $t$-clustering
	$u_1,\ldots,u_\ell$, the corresponding BFS trees $T_1,\ldots,T_\ell$ and 
	$G_\ell$ in $O(n^2)$ time.
\end{lemma}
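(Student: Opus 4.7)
\emph{Plan.} The plan is to run the natural greedy algorithm while maintaining the graphs $G_i$ dynamically, so that each BFS costs time proportional only to the current number of edges. Represent $G$ by doubly-linked adjacency lists with cross pointers (so that deleting an edge costs $\Oh{1}$ from both endpoint lists), keep a Boolean array marking which vertices are clustered, and maintain for each vertex $v$ a counter $d(v) = \abs{(\Gamma_G(v) \cup \set{v}) \sm (C_1 \cup \ldots \cup C_{i-1})}$. At iteration $i$ we scan $d$ in $\Oh{n}$ time to find a vertex $u_i$ of maximum value; if $d(u_i) < t$ we halt and output the current graph as $G_\ell$, otherwise we form $C_i$ by walking through the adjacency list of $u_i$, run BFS from $u_i$ in the current graph to obtain $T_i$, mark the vertices of $C_i$ as clustered, decrement $d(z)$ for every neighbor $z$ of every newly clustered vertex, and delete every edge whose two endpoints are now both clustered.

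\emph{Bookkeeping.} Initializing the adjacency lists, cross pointers, and counters is $\Oh{n+m}$. The per-vertex work of updating the $d$-values and of removing edges to other clustered vertices is $\Oh{\deg_G(w)}$ for each newly clustered $w$, so summed over the execution these two tasks cost $\Oh{\sum_v \deg_G(v)} = \Oh{m}$. The $\Oh{n}$ linear scan for $u_i$ is performed $\ell \le n/t \le n$ times, contributing $\Oh{n^2}$. So everything apart from the BFS calls already fits in $\Oh{n^2}$, and it remains to bound the total BFS cost $\sum_{i=1}^{\ell} \Oh{n + \abs{E(G_{i-1})}}$.

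\emph{Main obstacle.} The main obstacle is bounding $\sum_i \abs{E(G_{i-1})}$, which a priori could be as large as $m\ell = \Omegah{n^3/t}$. The key observation, coming directly from the greedy choice of $u_i$, is that every vertex $v \in V$ satisfies $\abs{\Gamma_G(v) \cap U_i} \le \abs{C_i}$, where $U_i = V \sm (C_1 \cup \ldots \cup C_{i-1})$, because $u_i$ maximizes $\abs{(\Gamma_G(v) \cup \set{v}) \cap U_i}$ over all $v$. Since every edge of $G_{i-1}$ has at least one endpoint in $U_i$, counting edge-endpoints that lie in $U_i$ yields
\begin{align*}
\abs{E(G_{i-1})} \,\le\, \sum_{v \in V} \abs{\Gamma_G(v) \cap U_i} \,\le\, n \cdot \abs{C_i}.
\end{align*}
As the clusters $C_1,\ldots,C_\ell$ are pairwise disjoint subsets of $V$, $\sum_i \abs{C_i} \le n$, whence $\sum_i \abs{E(G_{i-1})} \le n^2$ and the total BFS cost is $\Oh{n\ell + n^2} = \Oh{n^2}$. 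Together with the bookkeeping bounds this gives the claimed $\Oh{n^2}$ running time.
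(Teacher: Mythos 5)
Your proof is correct and takes essentially the same approach as the paper: maintain $G_{i-1}$ incrementally, bound the work per iteration by $\Oh{n\abs{C_i}}$ via the observation $\abs{E(G_{i-1})} \le n\abs{C_i}$, and sum using $\sum_i \abs{C_i} \le n$. The only difference is that you spell out the data-structure bookkeeping (adjacency lists with cross pointers, $d$-counters) and you actually derive the key bound $\abs{E(G_{i-1})} \le n\abs{C_i}$ from the greedy choice of $u_i$, whereas the paper simply asserts it.
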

\begin{proof}
The algorithm will work by finding the nodes $u_1,\ldots,u_\ell$ consecutively, i.e.
first $u_1$, then $u_2$ and so on.
The algorithm will maintain a graph $G'$. In the beginning of the algorithm we have $G' = G_0$,
and after we add $u_i$ we will alter $G'$ such that $G' = G_i$. The total cost of altering
all $G'$ will be $O(m) = O(n^2)$.

We find $u_i$ by looking at all nodes in $G' = G_{i-1}$ and count the number of neighbours not in
$C_1 \cup \ldots \cup C_{i-1}$.
Since $G_{i-1}$ has at most $n \abs{C_i}$ edges this takes $O(n\abs{C_i})$ time. Then the algorithm
finds a BFS tree from $u_i$ in $G_{i-1}$ in $O(n\abs{C_i})$ time. Hence the total time used 
by the algorithm is:
\begin{align*}
	O \! \left (
		m + \sum_{i=1}^\ell n\abs{C_i}
	\right )
	= O(n^2)
	\, .
\end{align*}
\end{proof}

\section{Constructing $O(1)$-Spanners}
\label{sec:construction}

In this section we present our construction of an $8$-spanner with $\Oh{n^{4/3}}$ edges
in $\Oh{n^2}$ time. As a warmup we show how we can use the clustering from Section \ref{sec:clustering}
to give a $2$-spanner with $\Oh{n^{3/2}}$ edges in $\Oh{n^2}$ time.

\begin{theorem}
	\label{thm:twospanner}
	There exists an algorithm that given a graph $G$ with $n$ nodes constructs a $2$-spanner
	of $G$ with $\le 2n^{3/2}$ edges in $\Oh{n^2}$ time.
\end{theorem}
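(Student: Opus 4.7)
The plan is to apply the clustering framework of \cref{sec:clustering} with parameter $t = \sqrt{n}$ and define the spanner $H$ as the union of the leftover graph $G_\ell$ together with all of the BFS trees $T_1, \ldots, T_\ell$. By \cref{lem:fastConstruction} this entire object can be built in $O(n^2)$ time, so the running-time claim is immediate.

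For the edge count, I would argue as follows. Since every cluster contains at least $t = \sqrt{n}$ nodes and the clusters are disjoint, the number of clusters satisfies $\ell \le n/t = \sqrt{n}$. Each BFS tree $T_i$ contributes at most $n-1$ edges, so the trees together contribute at most $\ell(n-1) \le n^{3/2}$ edges. By \cref{lem:fewedgesleft} the graph $G_\ell$ contributes at most $nt = n^{3/2}$ edges. Summing gives $|E(H)| \le 2n^{3/2}$, as required. The choice $t = \sqrt{n}$ is precisely the one that balances the two contributions.

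It remains to verify that $H$ is a $2$-spanner. Fix nodes $u,v \in V$ and consider any shortest path $p$ from $u$ to $v$ in $G$. There are two cases. If $p$ is contained in $G_\ell$, then since $G_\ell \subset H$ we have $d_H(u,v) \le d_{G_\ell}(u,v) = d_G(u,v)$. Otherwise, some shortest $u$-$v$ path uses an edge not in $G_\ell$, and \cref{lem:clusterOnSp} supplies an index $i$ with
\begin{align*}
    d_{T_i}(u_i,u) + d_{T_i}(u_i,v) \le d_G(u,v) + 2.
\end{align*}
Since $T_i \subset H$, the triangle inequality gives $d_H(u,v) \le d_{T_i}(u_i,u) + d_{T_i}(u_i,v) \le d_G(u,v) + 2$, establishing the additive $+2$ guarantee.

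There is no real obstacle here: this is a clean warm-up showing how the clustering lemmas plug together. The only thing to be careful about is the balancing of $t$ so that the constant in front of $n^{3/2}$ comes out to be exactly $2$, and that the counting argument for $\ell$ uses only the disjointness of the $C_i$ and the lower bound $|C_i| \ge t$ guaranteed by \cref{def:clustering}.
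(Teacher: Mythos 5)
Your proposal is correct and takes essentially the same route as the paper: set $t = \sqrt{n}$, take $H = G_\ell \cup T_1 \cup \cdots \cup T_\ell$, bound the tree edges by $\ell(n-1) \le n^{3/2}$ via $\ell \le n/t$ and the leftover edges by $nt = n^{3/2}$ via \cref{lem:fewedgesleft}, and establish the $+2$ stretch by splitting on whether a shortest path survives in $G_\ell$ and otherwise invoking \cref{lem:clusterOnSp}. The only cosmetic difference is that you spell out the $\ell(n-1)$ bound a bit more explicitly than the paper's terse ``$n\ell + nt \le 2n\sqrt n$''.
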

\begin{proof}
Let $t = \sqrt{n}$ and construct a $t$-clustering $u_1,\ldots,u_\ell$ with Lemma \ref{lem:fastConstruction}.
Let $H = T_1 \cup \ldots \cup T_\ell \cup G_\ell$. The number of edges in $H$ is at most
$n\ell + nt \le 2n\sqrt{n}$ by Lemma \ref{lem:fewedgesleft} and the fact that $\ell \le \frac{n}{t}$.

Now we just need to prove that $H$ is a $2$-spanner.
Let $u,v$ be arbitrary nodes and let $p$ be a shortest path from $u$ to $v$ in $G$. We wish 
to prove that
\begin{align}
	\label{eq:twospannerConstraint}
	d_H(u,v) \le d_G(u,v) + 2
	\, .
\end{align}
If $p$ is contained in $G_\ell$ then \eqref{eq:twospannerConstraint} is obviously true.
Otherwise there exists an index $i$ such that $d_{T_i}(u,v) \le d_G(u,v)+2$ by Lemma \ref{lem:clusterOnSp},
and \eqref{eq:twospannerConstraint} is true since $T_i \subset H$.
\end{proof}

Next we turn to showing how to create an $8$-spanner $H$ with $\Oh{n^{4/3}}$ edges in $\Oh{n^2}$ time.
The idea is the following. We start by creating a $t$-clustering $u_1,\ldots,u_\ell$ with
$t = n^{1/3}$ and $\ell \le n^{2/3}$. Using the BFS trees $T_1,\ldots,T_\ell$ along with
Lemma \ref{lem:clusterOnSp} we can then get an additive $2$-approximation of $d_G(u_i,u_j)$ for all
pairs of indices $i,j$, which we will call $\delta_{i,j}$. The calculation of the BFS trees in
$\Oh{n^2}$ time relies on an idea similar to one in \cite{aingworth99fastestimationshortestpath}.
The BFS trees also gives
us a path from $u_i$ to $u_j$ that is at most $2$ longer than the shortest path. If we add
all these shortest paths to our spanner along with $G_\ell$ and the neighbours in $C_i$ of
each $u_i$ we will get a $6$-spanner. Unfortunately, adding a path could require adding up
to $\Omega(\ell)$ edges, and since there are $\ell^2$ pairs we can only guarantee that the spanner has
$\Oh{\ell^3}$ edges, which is $\Oh{n^2}$ if $\ell \approx n^{2/3}$. (We only need to add edges on the
path that are not already in $G_\ell$) Instead we use an argument similar
to the path-buying argument from \cite{baswana10spanners} and the construction from
\cite{knudsen14simplespanners}. We add the path from $u_i$ to $u_j$ unless we can guarantee
that there is already an additive $2$-approximation of this path in the spanner already. We do
this by maintaining an upper bound $\Delta_{i,j}$ on the distance from $u_i$ to $u_j$ in the spanner $H$.
We then argue that if we add a path of with $k$ edges not already in the spanner,
then there are $\Omega(k)$ pairs $u_{i'}, u_{j'}$ for which the 
upper bound $\Delta_{i',j'}$ is improved. Then, this will imply that at most $\Oh{\ell^2}$ edges are
added giving an upper bound of $\Oh{n^{4/3}}$ on the number of edges in $H$.

After this informal discussion of the construction we turn to the details. The algorithm is
given a graph $G = (V,E)$ with $n$ nodes and $m$ edges, and will return a spanner $H = (V,F)$.
Initially $F = \emptyset$ and we will add edges to $H$ such that $H$ becomes a $8$-spanner
of $G$.  The algorithm starts by creating
a $t$-clustering $u_1,\ldots,u_\ell$ with $t = n^{1/3}$ using Lemma \ref{lem:fastConstruction} in
$\Oh{n^2}$ time. Since $\ell \le \frac{n}{t}$ we have
$\ell \le n^{2/3}$. Then we add edges from $u_i$ to all nodes in $C_i \setminus \set{u_i}$
to $H$ for all $i \in \set{1,2,\ldots,\ell}$. We add at most $n$ edges this way. Then we
add all edges from $G_\ell$ to $H$. This adds at most $nt = n^{4/3}$ edges to $H$.

We give each node $u \in V$ a color $c(u) \in \set{0,1,2,\ldots,\ell}$. If $u$ is unclustered
then $u$ has color $c(u) = 0$. Otherwise $c(u) = i$ where $i$ is the unique index such that $u \in C_i$.
For each pair of indices $i,j \in \set{1,2,\ldots,\ell}$ we define 
$\delta_{i,j}$ by:
\begin{align}
	\label{eq:deltaDefinition}
	\delta_{i,j} = \min_{k \in \set{1,2,\ldots,\ell}}
		\set{d_{T_k}(u_k,u_i) + d_{T_k}(u_k,u_j)}
	\, .
\end{align}
We first note that for a choice of $i,j$ we can calculate the right hand side of
\eqref{eq:deltaDefinition} in $\Oh{\ell}$ time since we are taking the minimum over
$\ell$ different values. So in $\Oh{\ell^3}$ time the algorithm calculates $\delta_{i,j}$ for 
all pairs of indices $i,j$. Since $\ell \le n^{2/3}$ this is within the $\Oh{n^2}$ time bound.
As a consequence of Lemma \ref{lem:clusterOnSp} we get that $\delta_{i,j}$ is a good approximation
of $d_G(u_i,u_j)$, more precisely:
\begin{align}
	\label{eq:deltaProperty}
	d_G(u_i,u_j) \le \delta_{i,j} \le d_G(u_i,u_j) + 2
	\, .
\end{align}
We now define $T_i'$ to be the tree obtained from $T_i$ by contracting each edge in $G_\ell$.
Since an edge is contained in $G_\ell$ iff at least one of its endpoints is unclustered we can
construct $T_i'$ from $T_i$ in $\Oh{n}$ time. The algorithm does so for all $i \in \set{1,2,\ldots,\ell}$
in $\Oh{n\ell} = \Oh{n^{5/3}}$ time. We note that the shortest path between two nodes $u,v$ in $T_i'$
contains exactly the edges on the shortest path between $u,v$ in $T_i$ excluding the edges
that are contained in $G_\ell$.


The algorithm initializes $\Delta_{i,j} = \infty$ for all pairs of indices $i,j$ with $i \neq j$
and let $\Delta_{i,i} = 0$ for all $i$. We will maintain
that $\Delta_{i,j}$ is an upper bound on $d_H(u_i,u_j)$ throughout the algorithm.
Now the algorithm goes through all pairs $u_i,u_j$ and adds an almost-shortest path between the
nodes if needed. Specifically, we do the following:

\begin{figure}[h!]
\begin{lstlisting}[title={\textbf{Algorithm 1}},label=list:8-6,captionpos=t,mathescape=true,abovecaptionskip=-\medskipamount]
For each pair of indices $i,j \in \set{1,2,\ldots,\ell}$:
    For all $k \in \set{1,2,\ldots,\ell}$:
        Set $\Delta_{i,j} := \min\set{\Delta_{i,j},\Delta_{i,k}+\Delta_{k,j}}$.
    If $\Delta_{i,j} > \delta_{i,j} + 2$ do:
        Find a $k \in \set{1,2,\ldots,\ell}$ such that $d_{T_k}(u_k,u_i)+d_{T_k}(u_k,u_j) = \delta_{i,j}$.
        Find the path $p$ from $u_i$ to $u_j$ in $T_k'$.
        Add all edges from $p$ to $H$.
        Write $p = (w_0,w_1,w_2,\ldots,w_{s-1})$.
        For all $x \in \set{0,1,2,...,s-1}$:
        	Set $y := d_{T_k}(u_i,w_x)$.
            Set $\Delta_{i,c(w_x)} := \min \set{\Delta_{i,c(w_x)},y+1}$
            Set $\Delta_{c(w_x),j} := \min \set{\Delta_{c(w_x),j},(\delta_{i,j}-y)+1}$
\end{lstlisting}
	\label{alg:addPaths}
\end{figure}

Let $L$ be an upper bound on the number of nodes of the path $p$ from $u_i$ to $u_j$ in $T_k'$ on
line 6. Then Algorithm 1 can implemented in $\Oh{\ell^3 + \ell^2L}$ time.
Hence we just need to prove that $L = \Oh{\ell}$ in order to conclude that it can be implemented in
$\Oh{\ell^3} = \Oh{n^2}$ time. This follows from the fact that $p$ is an almost shortest path and
the following reasoning.
If $p$ contained $> C\ell$ nodes for some sufficiently large constant $C$ it would contain more than $C$ nodes
of the same color. Since nodes of the same color have distance at most $2$ in $G$ this would imply
that there was a much shorter path from $u$ to $v$ in $G$ contradicting \eqref{eq:deltaProperty}
if $C$ was chosen large enough.
The details with $C=5$ are given in the following lemma:
\begin{lemma}
	\label{lem:colorsOnPath}
	The path $p$ contains no nodes of color $0$, and at most $5$ nodes of each color $\neq 0$.
\end{lemma}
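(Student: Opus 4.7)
The plan is to handle the two parts of the lemma separately: the claim about color $0$ follows from the construction of $T_k'$, and the bound of $5$ per color $\neq 0$ comes from a shortcut argument through the cluster center $u_c$ combined with \eqref{eq:deltaProperty}.

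For the first part, I would observe that $T_k'$ is obtained from $T_k$ by contracting exactly the edges of $G_\ell$, namely those incident to some unclustered node. Any two adjacent super-nodes of $T_k'$ are therefore joined by a surviving (non-$G_\ell$) $T_k$-edge, whose endpoints are by definition both clustered. Consequently, every super-node visited on $p$ can be named by a clustered node of $T_k$ (for example, the ``entry'' endpoint of the non-$G_\ell$ edge leading into it from the previous super-node, together with $w_0 = u_i$ at the start), and this is the natural choice of $w_x$. Thus $c(w_x) \neq 0$ for every $x$.

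For the second part, suppose toward contradiction that $r \ge 6$ of the representatives lie in a common $C_c$, and label them $w_{a_1}, \ldots, w_{a_r}$ in the order they appear on $p$. I would pick the representatives so that each $w_{a_s}$ lies on the unique $T_k$-path $P$ from $u_i$ to $u_j$; this is possible because every super-node on $p$ meets $P$ in a contiguous subpath, and any node in that intersection may serve as representative. Decompose $P = P_1 P_2 P_3$, where $P_1$ runs from $u_i$ to $w_{a_1}$, $P_2$ from $w_{a_1}$ to $w_{a_r}$, and $P_3$ from $w_{a_r}$ to $u_j$. Since $P_2$ visits the $r$ distinct nodes $w_{a_1}, \ldots, w_{a_r}$ in order, $|P_2| \ge r - 1 \ge 5$.

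I would then squeeze $|P_2|$ from above by two complementary inequalities. First, the triangle inequality in the tree $T_k$, together with the choice of $k$ on line 5 of Algorithm 1 and \eqref{eq:deltaProperty}, gives
\[
|P_1| + |P_2| + |P_3| = d_{T_k}(u_i,u_j) \le d_{T_k}(u_k,u_i) + d_{T_k}(u_k,u_j) = \delta_{i,j} \le d_G(u_i,u_j) + 2.
\]
Second, since $w_{a_1}, w_{a_r} \in C_c$ are each at $G$-distance at most $1$ from $u_c$, concatenating shortest $G$-paths through $u_c$ and using $d_G \le d_{T_k}$ along $P_1$ and $P_3$ yields
\[
d_G(u_i,u_j) \le d_G(u_i, w_{a_1}) + 2 + d_G(w_{a_r}, u_j) \le |P_1| + |P_3| + 2.
\]
Subtracting gives $|P_2| \le 4$, contradicting $|P_2| \ge 5$. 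The only subtle point is the representative-choice step, ensuring that the $w_{a_s}$ can simultaneously be regarded as nodes on the $T_k$-path $P$; everything else is a direct application of the two displayed inequalities.
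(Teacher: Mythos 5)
Your proof is correct and takes essentially the same approach as the paper's: the paper derives the contradiction to \eqref{eq:deltaProperty} by exhibiting a $G$-path from $u_i$ to $u_j$ through the common cluster center that is at least three edges shorter than the $T_k$-path, which is exactly your squeeze $5 \le |P_2| \le 4$ phrased constructively. You are somewhat more explicit about choosing clustered representatives for the contracted super-nodes so that they lie on the $T_k$-path, a point the paper treats as obvious.
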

\begin{proof}
Obviously $p$ does not contain
a node with color $0$, since all its incident edges would be contained in $G_\ell$ and hence not
in $T_k'$. Now assume for the sake of contradiction that $p$ contains $6$ nodes of some color
$r \neq 0$. When traversing $p$ from $u_i$ to $u_j$ let $\alpha$ and $\beta$ be the first and
the last node of color $r$ respectively. The distance from $\alpha$ to $\beta$ when following
$p$ must be at least $5$ by assumption. On the other hand $\alpha$ and $\beta$ have distance
at most $2$ in $G$. So there exists a path in $G$ from $u_i$ to $u_j$ that is at least $3$ edges
shorter that $p$. This contradicts \eqref{eq:deltaProperty}. Hence the assumption was wrong
and $p$ contains at most $5$ nodes of each color $\neq 0$.
\end{proof}

Since there are $\ell$ different colors $\neq 0$ the path $p$ contains at most $5\ell$ nodes
and the running time of Algorithm 1 is $\Oh{n^2}$.
So now we just need to prove that $H$ is an $8$-spanner and that $H$ has at most $\Oh{n^{4/3}}$ edges.
We start by proving that $H$ is an $8$-spanner. Here we will utilize that the $\Delta_{i,j}$ is
an upper bound on the distance from $u_i$ to $u_j$ in $H$. Furthermore, Algorithm 1 guarantees
that $\Delta_{i,j} \le \delta_{i,j}+2$. Together with \eqref{eq:deltaProperty} this gives that
\begin{align}
	\label{eq:DeltaProperty}
	d_H(u_i,u_j) \le d_G(u_i,u_j) + 4
	\, .
\end{align}


\begin{lemma}
	\label{lem:HIsSpanner}
	The subgraph $H$ of $G$ is an additive $8$-spanner of $G$.
\end{lemma}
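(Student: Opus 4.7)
The plan is to decompose any shortest path in $G$ into three parts: a prefix and a suffix both lying entirely in $G_\ell$, and a middle segment that we route through two cluster centers using the inequality \eqref{eq:DeltaProperty}.

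First I would fix an arbitrary pair $u,v$ and a shortest path $p$ from $u$ to $v$ in $G$. If $p$ contains no clustered node, then every edge of $p$ has both endpoints unclustered, so $p$ lies inside $G_\ell \subset H$ and $d_H(u,v) \le d_G(u,v)$. Otherwise, I would let $x$ be the first clustered node encountered when traversing $p$ from $u$, and $y$ the last; say $x \in C_a$ and $y \in C_b$ (possibly $a=b$, and possibly $x=u$ or $y=v$). By the choice of $x$ (resp. $y$), every edge on the subpath $p_1$ from $u$ to $x$ (resp. $p_3$ from $y$ to $v$) has at least one unclustered endpoint, so $p_1,p_3 \subset G_\ell \subset H$, giving
\begin{align*}
d_H(u,x) \le d_G(u,x), \qquad d_H(y,v) \le d_G(y,v).
\end{align*}

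Next I would connect $x$ and $y$ to their respective cluster centers in $H$. Since the algorithm inserted every edge from $u_i$ to the members of $C_i \setminus \{u_i\}$, we have $d_H(x,u_a) \le 1$ and $d_H(y,u_b) \le 1$. For the distance between the centers, I would apply \eqref{eq:DeltaProperty}, which gives
\begin{align*}
d_H(u_a,u_b) \le d_G(u_a,u_b) + 4 \le \bigl(d_G(u_a,x) + d_G(x,y) + d_G(y,u_b)\bigr) + 4 \le d_G(x,y) + 6.
\end{align*}
Then a single application of the triangle inequality in $H$ yields
\begin{align*}
d_H(u,v) &\le d_H(u,x) + d_H(x,u_a) + d_H(u_a,u_b) + d_H(u_b,y) + d_H(y,v) \\
&\le d_G(u,x) + 1 + \bigl(d_G(x,y)+6\bigr) + 1 + d_G(y,v) \\
&= d_G(u,v) + 8,
\end{align*}
using that $p$ is a shortest path so $d_G(u,x)+d_G(x,y)+d_G(y,v) = d_G(u,v)$.

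The routine but crucial step is verifying that \eqref{eq:DeltaProperty} is actually maintained, which needs the invariant that $\Delta_{i,j} \ge d_H(u_i,u_j)$ together with the fact that the algorithm never leaves a pair $(i,j)$ with $\Delta_{i,j} > \delta_{i,j}+2$ (and later updates only decrease the $\Delta$ values). The only real subtlety in the proof of the lemma itself is handling the degenerate cases $x=u$, $y=v$, or $a=b$; these fit into the same calculation, either because $d_G(u,x)=0$ or because $d_H(u_a,u_b)=0$ and the sum is still bounded by $d_G(u,v) + 2$.
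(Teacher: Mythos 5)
Your proof is correct, but it takes a genuinely different route from the paper's. You decompose an arbitrary shortest $u$--$v$ path directly into three segments: a prefix from $u$ to the first clustered node $x$ and a suffix from the last clustered node $y$ to $v$, both of which lie entirely in $G_\ell \subset H$, plus a middle segment that you route in $H$ through the cluster centers $u_a,u_b$ using \eqref{eq:DeltaProperty}. The paper instead argues by contradiction on a minimal violating pair $(u,v)$: minimality forces the first edge $(u,w_1)$ of a shortest path to be absent from $H$, which (since every edge incident to an unclustered node is in $G_\ell \subset H$) forces $u$ to be clustered, and symmetrically $v$; the paper then applies \eqref{eq:DeltaProperty} to the two centers $u_i,u_j$ of $u,v$ directly. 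Both arguments rest on the same three ingredients --- $G_\ell \subset H$, every cluster member is within distance $1$ of its center in $H$, and \eqref{eq:DeltaProperty} --- but your direct decomposition dispenses with the minimality hypothesis and exhibits an explicit $H$-walk of length at most $d_G(u,v)+8$ for every pair, which is arguably the cleaner formulation; the paper's contradiction version is a bit more compact to write down. Your handling of the degenerate cases ($x=u$, $y=v$, $a=b$, no clustered node at all) is also correct and needed, whereas the paper's minimality framing sidesteps them automatically.
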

\begin{proof}
Assume for the sake of contradiction that $H$ is not an additive $8$-spanner and let $u,v$
be a pair of nodes with shortest possible distance in $G$ such that:
\begin{align}
	\label{eq:contradictionHSpanner}
	d_H(u,v) > d_G(u,v) + 8
	\, .
\end{align}
Say that $d_G(u,v) = D$ and let $p = (w_0,w_1,\ldots,w_D)$ be a shortest path from $u$ to
$v$ in $G$ where $w_0 = u$ and $w_D = v$. Since the pair $(u,v)$ has the smallest possible distance
in $G$ such that \eqref{eq:contradictionHSpanner} holds and $d_G(w_1,v) = D-1$ we have
$d_H(w_1,v) \le (D-1)+8$. In particular the edge $(u,w_1)$ is not in $H$ as it would
contradict \eqref{eq:contradictionHSpanner}. Hence $u$ cannot be unclustered, as all the
edges incident to an unclustered node is contained in $G_\ell$ and therefore $H$. With the same
reasoning we conclude that $v$ is clustered. Let the colors of $u$ and $v$ be $i$ and $j$
respectively. The distances from $u$ and $v$ to $u_i$ and $u_j$ respectively are at most $1$.
Combining this insight with \eqref{eq:DeltaProperty} we get:
\begin{align}
	\notag
	d_H(u,v) \le 
	d_H(u_i,u_j) + 2 \le 
	d_G(u_i,u_j) + 6 \le 
	d_G(u,v) + 8
	\, .
\end{align}
But this contradicts the assumption \eqref{eq:contradictionHSpanner}. Hence the assumption was wrong
and $H$ is an additive $8$-spanner of $G$.
\end{proof}

Lastly, we need to prove that $H$ contains no more than $\Oh{n^{4/3}}$ edges. Informally, we argue
the following way. Whenever the $s-1$ edges of $p$ are added to $H$ on line 7 of Algorithm 1 there
are $\Omega(s)$ different colors on $p$. For each color $r$ on $p$ we then argue that either
$\Delta_{i,r}$ or $\Delta_{r,j}$ are made smaller on line 11 or 12 of Algorithm 1. Lastly, we
argue that $\Delta_{i,j}$ can only be updated $\Oh{1}$ times, and since there are $\ell^2 \le n^{4/3}$
variables $\Delta_{i,j}$ this implies that Algorithm 1 only adds $\Oh{n^{4/3}}$ edges to $H$.
This intuition is formalized in Lemma \ref{lem:algAddsFewEdges} bellow:
\begin{lemma}
	\label{lem:algAddsFewEdges}
	Algorithm 1 adds no more than $25\ell^2$ edges to $H$.
\end{lemma}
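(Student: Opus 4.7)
The plan is to charge each edge added by Algorithm~1 to a strict decrease of some entry $\Delta_{a,b}$ effected by line~11 or line~12, and then to show that the total number of such decreases is $O(\ell^2)$. For the per-entry count to be bounded by a constant, I will assume the outer for-loop visits the pairs $(i,j)$ in non-decreasing order of $\delta_{i,j}$; sorting the $\ell^2$ values $\delta_{i,j}$ takes $O(\ell^2\log\ell)$ time, comfortably fitting within the $\Oh{n^2}$ budget provided by \cref{lem:fastConstruction}.

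The argument rests on two observations about each individual path. First, by \cref{lem:colorsOnPath}, the path $p=(w_0,\ldots,w_{s-1})$ added on line~7 for pair $(i,j)$ has no node of color~$0$ and at most $5$ nodes of any nonzero color, so if $c_{ij}$ denotes the number of distinct colors on $p$ then $s\le 5c_{ij}$ and the iteration adds at most $s-1\le 5c_{ij}$ edges. Second (the \emph{witness claim}), for each distinct color $r$ on $p$ at least one of $\Delta_{i,r}$ or $\Delta_{r,j}$ is strictly decreased by line~11 or~12 during the inner for-loop at lines~9--12. Otherwise, picking any $w_x$ with $c(w_x)=r$ and $y=d_{T_k}(u_i,w_x)$, we would have both $\Delta_{i,r}\le y+1$ and $\Delta_{r,j}\le(\delta_{i,j}-y)+1$ at the moment the test on line~4 was evaluated (neither entry is modified inside the inner for-loop except by iterations $x'$ with $c(w_{x'})=r$, and those modifications are monotone), and then the triangle-inequality sweep on lines~2--3 would force $\Delta_{i,j}\le\Delta_{i,r}+\Delta_{r,j}\le\delta_{i,j}+2$, contradicting the trigger condition.

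The main obstacle is the third ingredient: each $\Delta_{a,b}$ can be strictly decreased by lines~11--12 at most $4$ times over the whole execution. The lower bound $\Delta_{a,b}\ge d_H(u_a,u_b)\ge d_G(u_a,u_b)$ holds automatically because $\Delta$ remains a valid upper bound on $d_H$ and $H\subseteq G$. For the matching upper bound, a line~11 write inside iteration $(a,j)$ has value $y+1\le\delta_{a,j}+1$ since $w_x$ lies on the $T_k$-tree path from $u_a$ to $u_j$, whose length is at most $\delta_{a,j}$. If $(a,j)$ is processed before $(a,b)$, the ordering gives $\delta_{a,j}\le\delta_{a,b}\le d_G(u_a,u_b)+2$ and the written value is at most $d_G(u_a,u_b)+3$. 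Otherwise the iteration for $(a,b)$ has already finished, so $\Delta_{a,b}\le\delta_{a,b}+2\le d_G(u_a,u_b)+4$ at the moment of the write, and for the write to be a strict decrease its value must be at most $d_G(u_a,u_b)+3$. The argument for line~12 is symmetric. Hence every strict-decrease value of $\Delta_{a,b}$ caused by lines~11--12 lies in the four-integer set $\{d_G(u_a,u_b),\ldots,d_G(u_a,u_b)+3\}$, giving at most $4$ such decreases per entry.

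Putting everything together, lines~11--12 produce at most $4\ell^2$ strict decreases of $\Delta$-entries across the execution, so the witness claim gives $\sum_{(i,j)}c_{ij}\le 4\ell^2$. The total number of edges added is therefore at most $\sum_{(i,j)}(s_{ij}-1)\le 5\sum_{(i,j)}c_{ij}\le 20\ell^2\le 25\ell^2$, as claimed. The slack in the final constant absorbs any additional decreases caused by the triangle-inequality step on lines~2--3, which the charging scheme does not need to account for.
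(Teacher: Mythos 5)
Your charging scheme is the same as the paper's: use \cref{lem:colorsOnPath} to relate the number of edges added to the number of distinct colors on $p$, use the "witness" observation (if neither $\Delta_{i,r}$ nor $\Delta_{r,j}$ strictly decreases then the triangle-inequality sweep on lines~2--3 would have made the line-4 test false) to lower-bound the number of strict $\Delta$-decreases by the number of colors, and then bound the number of strict decreases per entry by a constant. Up to here you match the paper exactly. The point of divergence is the last step, and that is where your proof has a genuine gap.

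The paper proves that \emph{every} value $y+1$ written to $\Delta_{i,r}$ on line~11 satisfies $y+1\le d_G(u_i,u_r)+4$, by a purely geometric argument that uses no information about the processing order: since the $T_k$-path from $u_i$ to $u_j$ has length at most $\delta_{i,j}\le d_G(u_i,u_j)+2$, any prefix of it is an almost-shortest path, so $d_G(u_i,w_x)\ge y-2$ and hence $d_G(u_i,u_r)\ge y-3$. This gives at most $5$ strict decreases per entry for any order of the outer loop. You instead bound the written value by $\delta_{a,j}+1$ and need $\delta_{a,j}\le\delta_{a,b}$ to finish, which only holds if you insert a sort of the $\ell^2$ pairs by $\delta$-value. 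Algorithm~1 as written has no such sort; the lemma is a statement about that algorithm, and your Case~1 ("$(a,j)$ is processed before $(a,b)$") simply has no bound on $y+1$ in terms of $d_G(u_a,u_b)$ once the sort is removed. So your argument proves the lemma only for a modified algorithm, not for the one in the paper. The fix is exactly the paper's observation that $y\le d_G(u_i,w_x)+2\le d_G(u_i,u_r)+3$, which makes the ordering irrelevant and gives the (slightly looser) per-entry bound of $5$, yielding $25\ell^2$.

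Two smaller points. First, your witness-claim paragraph is a little loose as phrased: the cleanest version fixes the \emph{first} $w_x$ of color $r$ in the inner loop, since at that point $\Delta_{i,r}$ and $\Delta_{r,j}$ still carry their values from when the line-4 test was evaluated (lines~2--3 only touch $\Delta_{i,j}$); the contradiction via line~3 with $k=r$ is then immediate. Second, your constant of $20\ell^2$ is fine (and in fact slightly better than the paper's $25\ell^2$ if one is allowed to sort), but since the sorted-order assumption is the load-bearing step, the tighter constant does not compensate for the change of algorithm.
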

\begin{proof}
Say that the algorithm adds the edges from the path $p = (w_0,w_1,\ldots,w_{s-1})$ on line 7
of Algorithm 1 where $w_0=u_i, w_{s-1}=u_j$. First we note that since $d_G(u_i,u_j) \ge \delta_{i,j}-2$
by \eqref{eq:deltaProperty} we have that $d_G(u_i,w_x) \ge y-2$ for every
$x \in \set{0,1,\ldots,s-1}$, where we consider $y$ to be a function of $x$ defined by
$y = d_{T_k}(u_i,w_x)$ as on line 10. Now fix $x$ and let
$r = c(w_x)$. Then there is an edge between $w_x$ and $u_r$ and therefore
$d_G(u_i, u_r) \ge y-3$, i.e. $y+1 \le d_G(u_i,u_r)+4$. So if Algorithm 1 decreases $\Delta_{i,r}$
on line 11 we have $\Delta_{i,r} \le d_G(u_i,u_r)+4$ after it is decreased. Since $\Delta_{i,r}$
is an upper bound on $d_H(u_i,u_r)$ and therefore also an upper bound on $d_G(u_i,u_r)$ we see that
$\Delta_{i,r}$ can be decreased at most $5$ times for each choice of $i,r$. By symmetry we see that
we can also decrease $\Delta_{r,j}$ on line 12 at most $5$ times. Since there are $\ell^2$ pairs of
indices the algorithm can change the values of $\Delta_{i,r}$ or $\Delta_{r,j}$ on line 11 and 12
of Algorithm 1 at most $5\ell^2$ times.

Let $r$ be a color on $p$. After the execution of lines 9-12 we have
\begin{align}
	\notag
	\Delta_{i,r} + \Delta_{r,j} \le
	\delta_{i,j} + 2
	\, .
\end{align}
Due to the execution of lines 2 and 3 this was not the case before. Hence either $\Delta_{i,r}$
or $\Delta_{r,j}$ were updated. By Lemma \ref{lem:colorsOnPath} there are at least $\frac{s}{5}$ colors
on $p$, so if the algorithm adds $A$ edges in total it makes at least $\frac{A}{5}$ updates of
upper bounds $\Delta_{i,r}$ or $\Delta_{r,j}$. Since there can be at most $5\ell^2$ such updates
we conclude that $\frac{A}{5} \le 5\ell^2$ and that Algorithm 1 adds no more than $5\ell^2$
edges.
\end{proof}

To summarize, the algorithm presented in this section runs in $\Oh{n^2}$ time and gives an additive
$8$-spanner with no more than $26n^{4/3}+n = \Oh{n^{4/3}}$ edges. We have made no attempt to optimize
the constant in the $O$-notation. Hence we get:

\begin{theorem}
	\label{thm:eightspanner}
	There exists an algorithm that given a graph $G$ with $n$ nodes constructs a $8$-spanner
	of $G$ with $\Oh{n^{4/3}}$ edges in $\Oh{n^2}$ time.
\end{theorem}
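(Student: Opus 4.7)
The plan is to start from a $t$-clustering with $t=n^{1/3}$ (hence $\ell\le n^{2/3}$) produced by \Cref{lem:fastConstruction}, seed the spanner $H$ with the edges of $G_\ell$ together with the star from each $u_i$ to $C_i$, and then add only carefully chosen inter-cluster paths. By \Cref{lem:fewedgesleft} the seed contributes $\Oh{nt}+n = \Oh{n^{4/3}}$ edges, and any shortest path that is not entirely inside $G_\ell$ passes, by \Cref{lem:clusterOnSp}, within $+2$ of some tree $T_k$ through a center $u_k$. This suggests reducing the problem to approximating pairwise distances between the $\le\ell$ centers, then stitching back the extra $+4$ incurred by hopping from any node to its cluster center and back.

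Next I would precompute $\delta_{i,j}=\min_k (d_{T_k}(u_k,u_i)+d_{T_k}(u_k,u_j))$ by scanning the $\ell$ trees for each of the $\ell^2$ pairs, taking $\Oh{\ell^3}=\Oh{n^2}$ time, and by \Cref{lem:clusterOnSp} this satisfies $d_G(u_i,u_j)\le \delta_{i,j}\le d_G(u_i,u_j)+2$. I would also contract every edge of $G_\ell$ inside each $T_k$ to obtain $T_k'$, so that the witness path $p$ from $u_i$ to $u_j$ in $T_k'$ consists only of edges not already in $H$ and contains no unclustered nodes. Running Algorithm~1 then maintains an upper bound $\Delta_{i,j}\ge d_H(u_i,u_j)$: first tighten $\Delta_{i,j}$ by the triangle inequality through all centers, and if it still exceeds $\delta_{i,j}+2$ add the whole path $p$ and update, for each node $w_x$ of $p$ with color $r=c(w_x)$ and depth $y=d_{T_k}(u_i,w_x)$, the bounds $\Delta_{i,r}\le y+1$ and $\Delta_{r,j}\le(\delta_{i,j}-y)+1$.

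For the stretch bound, the invariant $\Delta_{i,j}\le\delta_{i,j}+2$ combined with \eqref{eq:deltaProperty} yields $d_H(u_i,u_j)\le d_G(u_i,u_j)+4$, i.e.\ \eqref{eq:DeltaProperty}. I would then argue by a minimum-counterexample $(u,v)$ with $d_H(u,v)>d_G(u,v)+8$: both endpoints must be clustered (every edge incident to an unclustered node lies in $G_\ell\subseteq H$, and the edge $(u,w_1)$ on the shortest path cannot be in $H$ by minimality), so hopping from $u$ and $v$ to their respective centers $u_i,u_j$ costs only $+2$ total, contradicting \eqref{eq:contradictionHSpanner} via \eqref{eq:DeltaProperty}. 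This is \Cref{lem:HIsSpanner}.

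The hard part is bounding the number of edges added and the running time, which rests on two facts. The first is \Cref{lem:colorsOnPath}: six nodes of the same color on $p$ would yield two same-colored endpoints at tree-distance $\ge 5$ but $G$-distance $\le 2$, contradicting $\delta_{i,j}\le d_G(u_i,u_j)+2$; hence $|p|\le 5\ell$, so each iteration runs in $\Oh{\ell}$ and the algorithm runs in $\Oh{\ell^3}=\Oh{n^2}$. The second is an amortized counting argument (\Cref{lem:algAddsFewEdges}): an update on line~11 forces $\Delta_{i,r}\le d_G(u_i,u_r)+4$, which can happen at most $5$ times per pair since $\Delta_{i,r}\ge d_G(u_i,u_r)$ monotonically, and symmetrically for line~12; since every color appearing on $p$ triggers such an update, the total number of added edges is at most $5\cdot 5\ell^2=\Oh{\ell^2}=\Oh{n^{4/3}}$. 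Adding the $\Oh{n^{4/3}}$ edges of the seed yields the claimed bound, completing the proof.
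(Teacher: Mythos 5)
Your proposal matches the paper's proof essentially step for step: the same $n^{1/3}$-clustering seeded spanner, the same $\delta_{i,j}$ approximation from \Cref{lem:clusterOnSp}, the same contracted trees $T_k'$ feeding Algorithm~1, and the same three lemmas (\Cref{lem:colorsOnPath}, \Cref{lem:HIsSpanner}, \Cref{lem:algAddsFewEdges}) proved by the same arguments. The bookkeeping is also correct, including the $25\ell^2$ bound on edges added and the $\Oh{\ell^3}=\Oh{n^2}$ running time.
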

\section{Distance Oracles}
\label{sec:distOracle}

In the following we show how to modify the construction by Sommer \cite{sommer2016distanceOracle}
to obtain a $(2,1)$-distance oracle of size $\Oh{n^{5/3}}$ that can be constructed in expected
$\Oh{n^2}$ time.

Let $G$ be a given graph, and $H$ an $8$-spanner of $G$ constructed by Theorem \ref{thm:eightspanner}.
$H$ is constructed in $\Oh{n^2}$ time and has $\Oh{n^{4/3}}$ edges. During the construction we use
only $\Oh{n^{5/3}}$ space.

Let $u_1,u_2,\ldots,u_\ell$ be a $n^{1/3}$-clustering of $G$. 
Using Lemma \ref{lem:fastConstruction} we obtain $T_1,\ldots,T_\ell$ and $G_\ell$ in $\Oh{n^2}$ time.
For each node $v$ we define four \emph{portals} $p_1(v), p_2(v), p_3(v), p_4(v)$. We define
$p_1(v) = u_i$, where $u_i$ is chosen such that the distance between $v$ and $u_i$ in $T_i$ is minimized.
In case of ties we choose the node $u_i$ with the lowest index $i$. The node $p_{j+1}(v)$ for
$j=1,2,3$ is chosen depending on $p_j(v)$. If $p_j(v) = u_1$ we let $p_{j+1}(v) = u_1$. Otherwise
$p_j(v) = u_i$ for some index $i$. We let $p_{j+1}(v) = u_{i'}$ where $u_{i'}$ is chosen among
$u_1,u_2,\ldots,u_{i-1}$ such that the distance between $u_{i'}$ and $v$ in $T_{i'}$ is minimized.
In case of ties we choose the node $u_{i'}$ with the lowest index $i'$.
The portals for all nodes can be found in $\Oh{n^{5/3}}$ time.

We will use the following lemma by  P\v{a}tra\c{s}cu and Roditty \cite{patrascu2014distance}
to construct a $(2,1)$-distance oracle for $G_\ell$, that uses space $\Oh{n^{5/3}}$.
\begin{lemma}[\cite{patrascu2014distance}]
	\label{lem:patrascuRoditty}
	For any unweighted, undirected graph, there exists a distance oracle of size $\Oh{n^{5/3}}$ that,
	given any nodes $u$ and $v$ at distance $d$, returns a distance of at most $2d+1$ in constant time.
	The distance oracle can be constructed in expected time $\Oh{mn^{2/3}}$.
\end{lemma}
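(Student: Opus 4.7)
The plan follows the standard random-pivot framework of Thorup--Zwick, adapted to the unweighted regime to get additive rather than multiplicative slack. First I would sample a pivot set $A \subseteq V$ by including each vertex independently with probability $p = n^{-1/3}$, so that $\E[|A|] = n^{2/3}$. Running BFS from each $a \in A$ takes $O(m \cdot n^{2/3})$ expected time and produces the exact distance $d(a, v)$ for every $a \in A$, $v \in V$; storing the full $|A| \times n$ distance array uses $O(n^{5/3})$ words. For each $v$, let $p(v) \in A$ be a closest pivot and set $\Delta(v) = d(v, p(v))$.

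Next, for each $v$ I would compute its \emph{bunch} $B(v) = \{u : d(v, u) < \Delta(v)\}$ by running a truncated BFS from $v$ that halts the first time it reaches a vertex of $A$. The standard analysis gives $\E[|B(v)|] = O(1/p) = O(n^{1/3})$, because the vertices enumerated in BFS order from $v$ are independent $\mathrm{Bernoulli}(p)$ trials for $A$-membership and the first-success time is geometric with mean $1/p$. All bunches together then take $O(n^{4/3})$ words and $O(m\, n^{1/3})$ time in expectation; I would store each $B(v)$ as a hash table keyed on $u$ with value $d(v,u)$. The query on $(u, v)$ returns, in $O(1)$ time, the minimum of (i) the exact distance $d(u, v)$ from a hash lookup if $v \in B(u)$ or $u \in B(v)$, and (ii) the via-pivot estimate $d(u, p(u)) + d(p(u), v)$ read off the distance matrix.

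The main obstacle is sharpening the stretch from the classical Thorup--Zwick bound of $3 d(u, v)$ down to the $2 d(u, v) + 1$ required by the lemma. When the bunch lookups fail we have $\Delta(u) \le d(u, v)$, and the triangle inequality only gives $d(u, p(u)) + d(p(u), v) \le 2 \Delta(u) + d(u, v) \le 3 d(u, v)$. To save the remaining factor and pay only an additive $+1$, I would query both $p(u)$ and $p(v)$ and take the smaller via-pivot estimate, and also add to the bunch every vertex at distance exactly $\Delta(v)$ from $v$. A case split by the parity of $d(u, v)$ then lets one charge the slack to an $A$-vertex near the midpoint of a shortest $u$--$v$ path, with integrality of the edge lengths eliminating all but a single unit of error; this is the P\v{a}tra\c{s}cu--Roditty refinement and is the technical heart of the proof.

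Finally, the construction is converted from Monte Carlo to Las Vegas in the standard way: after sampling $A$, verify that $|A| = O(n^{2/3})$ and (during bunch BFS) that the running total $\sum_v |B(v)|$ stays within a constant factor of $n^{4/3}$; if either check fails, discard and resample. By Markov's inequality the probability of success is $\Omega(1)$ per attempt, so the expected number of attempts is $O(1)$ and the overall expected construction time remains $O(m \cdot n^{2/3})$.
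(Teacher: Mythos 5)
The paper does not actually prove this lemma; it imports it verbatim from P\v{a}tra\c{s}cu and Roditty \cite{patrascu2014distance} and only adds a one-line remark that the running time in their proof can be tightened from $\Oh{mn^{2/3}+n^{7/3}}$ to $\Oh{mn^{2/3}}$ (and that Claim~9 of \cite{patrascu2014distance} already gives $\Oh{mn^{2/3}+n^2}$, which suffices here). So there is no in-paper proof to compare against, and your sketch must stand on its own.

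As it stands, the sketch has a real gap precisely at what you call the ``technical heart,'' and your proposed fix does not close it. Querying both $p(u)$ and $p(v)$ and taking the minimum yields at best $2\min(\Delta(u),\Delta(v)) + d$, while the bunch lookup only succeeds when $d \le \max(\Delta(u),\Delta(v))$. When $(d+1)/2 < \Delta(u), \Delta(v) < d$, \emph{both} fallbacks fail: the bunches miss and neither via-pivot estimate is within $2d+1$. This is not pathological --- on a long path with $u$ and $v$ in the interior and the nearest landmarks out near the two ends, one can make $\Delta(u) \approx \Delta(v) \approx 0.8 d$ so that the via-pivot estimate is about $2.6d$ and the bunch test fails. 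No parity argument or ``$A$-vertex near the midpoint'' rescues this, because nothing in the construction forces a landmark near the midpoint of an arbitrary query pair; the only structural guarantee you have is $\Delta(u), \Delta(v) \le d$. The actual P\v{a}tra\c{s}cu--Roditty argument gets below stretch~$3$ by storing additional local structure beyond the vertex-bunches of Thorup--Zwick (roughly, information tied to vertices and edges on the boundary of each ball) and by a more elaborate $O(1)$-time query that exploits this extra structure, not by a refinement of the two-pivot estimate. To make your proof attempt complete you would need to state and verify that extra structure, its $\Oh{n^{5/3}}$ size bound, and the corresponding case analysis; these are the nontrivial content of the cited theorem and cannot be left as ``the technical heart.''
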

In the proof in \cite{patrascu2014distance} they only claim a running time of $\Oh{mn^{2/3} + n^{7/3}}$,
however, this can be fixed to give the correct running time of $\Oh{mn^{2/3}}$ \cite{rodittyPersonalCom}.
By \cite[Claim 9]{patrascu2014distance} it is easy to see how to get a running time of $\Oh{mn^{2/3}+n^2}$
which suffice for our purposes.

We are now ready to define the distance oracle. For each $i = 1,2,\ldots,\ell$ we store the distances
$d_{T_i}(u_i, v)$ and $d_{H}(u_i, v)$ for all nodes $v$. The distances $d_H(u_i,v)$ can be calculated
using a BFS in time $\Oh{\ell n^{4/3}} = \Oh{n^2}$. For each node $v$ we store its portals $p_j(v), j=1,2,3,4$.
We augment this distance oracle with the P\v{a}tra\c{s}cu-Roditty distance oracle from Lemma 
\ref{lem:patrascuRoditty} for $G_\ell$.

We now show how to use the distance oracle to obtain approximate distances for a query $u,v$. We let
$\delta_{PR}(u,v)$ be the approximate distance in $G_\ell$ returned by the P\v{a}tra\c{s}cu-Roditty 
distance oracle. We define $\delta_j(u,v)$ in the following way. Let $p_j(u) = u_i$. Then 
$\delta_j(u,v) = d_{T_i}(u_i,u) + \min \set {d_{T_i}(u_i,v), d_{H}(u_i,v)}$. The distance returned
by the distance oracle is the minimum of $\delta_{PR}(u,v)$, $\delta_j(u,v)$ and $\delta_j(v,u)$
for $j=1,2,3,4$.

We will now argue that if the the distance between $u$ and $v$ is $d$, then the distance oracle
returns a distance between $d$ and $2d+1$. The distance returned is obviously at least $d$, so
we just need to show that it is at most $2d+1$. Consider a shortest path between $u$ and $v$ in $G$.
If there is at most one
node on the shortest path which is incident to a node $u_i$ in the clustering then the shortest
path is contained in $G_\ell$, and therefore:
\begin{align*}
	\delta_{PR}(u,v) \le 2d_{G_\ell}(u,v) + 1
	= 2d+1
	\, .
\end{align*}
So assume that there exists a edge on the shortest path not in $G_\ell$. Let $i$ be the smallest 
index such that there is an edge $(z,t)$ on the shortest path with $z,t \in C_1 \cup \ldots \cup C_i$. Say that
$z$ is closer to $u$ than to $v$ in $G$. Assume that $z \in C_i$ and $t \in C_{i'}$ for some index $i' \le i$
(the case where $z \in C_{i'}$ and $t \in C_{i}$ is handled symmetrically).
Since the shortest path is contained in $G_{i-1}$ and $G_{i'-1}$ we have that
$d_{T_i}(u_i,u) + d_{T_{i'}}(u_{i'},v) \le d+1$ and therefore:
\begin{align*}
	\min \set{ d_{T_i}(u_i,u), d_{T_i}(u_i,v)} \le \frac{d+1}{2}
	\, .
\end{align*}
Assume that $d_{T_i}(u_i,u) \le \frac{d+1}{2}$. The other case is handled similarly.
Say that $p_j(u) = u_{k_j}$ for $j=1,2,3,4$. 
First assume that $k_j > i$ for all $j=1,2,3,4$. Then we conclude that 
$d_{T_{k_1}}(p_1(u),u) \le d_{T_i}(u_i,u) - 4$. The distance returned by the distance oracle is
at most
\begin{align*}
	\delta_1(u,v)
	& \le d_{T_{k_1}}(p_1(u),u) + d_H(p_1(u),v)
	\\ & \le 
	2d_{T_{k_1}}(p_1(u),u) + d_H(u,v)
	\\ & \le
	2(d_{T_i}(u_i,u)-4) + d + 8
	\le 
	2d + 1
	\, .
\end{align*}
Now assume that $k_j \le i$ for some $j \in \set{1,2,3,4}$ and let $j$ be the smallest index such that $k_{j} \le i$.
By definition we have that $d_{T_{k_{j}}}(p_j(u),u) \le d_{T_i}(u_i,u)$. Furthermore the shortest path
is contained in $G_{j-1}$ and therefore $d_{T_{k_{j}}}(p_j(u),v) \le d_{T_{k_{j}}}(p_j(u),u) + d_G(u,v)$.
The distance returned is at most
\begin{align*}
	\delta_j(u,v)
	& \le 
	d_{T_{k_{j}}}(p_j(u),u) + d_{T_{k_{j}}}(p_j(u),v)
	\\ & \le 
	2d_{T_{k_{j}}}(p_j(u),u) + d
	\\ & \le 
	2d_{T_i}(u_i,u) + d
	\le
	2d+1
	\, .
\end{align*}
We conclude that the distance returned by the distance oracle is always between $d$ and $2d+1$. The
result is summarized in Theorem \ref{thm:distanceOracle}.
\begin{theorem}
	\label{thm:distanceOracle}
	For any unweighted, undirected graph, there exists a distance oracle of size $\Oh{n^{5/3}}$ that,
	given any nodes $u$ and $v$ at distance $d$, returns a distance of at most $2d+1$ in constant time.
	The distance oracle can be constructed in expected time $\Oh{n^2}$.
\end{theorem}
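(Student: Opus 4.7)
My plan is to build the distance oracle as a hybrid of three ingredients: the additive $8$-spanner from Theorem \ref{thm:eightspanner}, an $n^{1/3}$-clustering of $G$, and the Pătraşcu--Roditty oracle of Lemma \ref{lem:patrascuRoditty} applied to the sparse residual graph $G_\ell$. Concretely, I will first invoke Theorem \ref{thm:eightspanner} to get an $8$-spanner $H$ with $O(n^{4/3})$ edges in $O(n^2)$ time, then use Lemma \ref{lem:fastConstruction} with $t = n^{1/3}$ to build the clustering $u_1,\ldots,u_\ell$ (so $\ell \le n^{2/3}$), the BFS trees $T_i$, and the graph $G_\ell$ which by Lemma \ref{lem:fewedgesleft} has only $O(n^{4/3})$ edges. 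For every node $v$ I will precompute four ``portals'' $p_1(v),\ldots,p_4(v)$, each a cluster center; $p_1(v)$ is the $u_i$ minimizing $d_{T_i}(u_i,v)$, and $p_{j+1}(v)$ repeats this selection restricted to cluster centers of strictly smaller index than the previous portal. Finally, I will run Pătraşcu--Roditty on $G_\ell$ and store, for each $i$ and $v$, the values $d_{T_i}(u_i,v)$ and $d_H(u_i,v)$.

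For the resource bounds, the spanner and clustering cost $O(n^2)$, and a BFS from each $u_i$ in $H$ totals $O(\ell \cdot n^{4/3}) = O(n^2)$. The Pătraşcu--Roditty oracle on $G_\ell$ runs in expected time $O(|E(G_\ell)| \cdot n^{2/3} + n^2) = O(n^2)$. Space is dominated by the $\ell \cdot n = O(n^{5/3})$ precomputed distances and the oracle itself, which is also $O(n^{5/3})$. The query returns $\min\bigl(\delta_{PR}(u,v),\ \min_{j\le 4}\delta_j(u,v),\ \min_{j\le 4}\delta_j(v,u)\bigr)$ where $\delta_j(u,v) = d_{T_{k_j}}(u_{k_j},u) + \min\{d_{T_{k_j}}(u_{k_j},v), d_H(u_{k_j},v)\}$ and $u_{k_j}=p_j(u)$; this is clearly $O(1)$ time, and each term is at least $d := d_G(u,v)$, so it suffices to upper bound it by $2d+1$.

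For correctness I will split on whether some shortest $u$--$v$ path in $G$ lies entirely in $G_\ell$. If it does, the Pătraşcu--Roditty estimate already gives $\delta_{PR}(u,v) \le 2d+1$. Otherwise, let $i$ be the smallest index such that some edge $(z,t)$ of the shortest path has both endpoints in $C_1\cup\ldots\cup C_i$; then the path lies entirely in $G_{i-1}$, and a quick triangle-inequality argument using the edge from $z$ (or $t$) to $u_i$ shows $d_{T_i}(u_i,u) + d_{T_i}(u_i,v) \le d+1$, so WLOG $d_{T_i}(u_i,u) \le (d+1)/2$. Now I split on where the portals of $u$ lie relative to $i$: if all four satisfy $k_j > i$, then by construction of portals $d_{T_{k_1}}(p_1(u),u) \le d_{T_i}(u_i,u) - 4$, and combining with the spanner bound $d_H(p_1(u),v) \le d_G(p_1(u),v)+8 \le d_{T_{k_1}}(p_1(u),u)+d+8$ yields $\delta_1(u,v) \le 2d+1$; otherwise there is a smallest $j$ with $k_j \le i$, and by the definition of the portal chain $d_{T_{k_j}}(p_j(u),u) \le d_{T_i}(u_i,u)$, while $d_{T_{k_j}}(p_j(u),v) \le d_{T_{k_j}}(p_j(u),u)+d$ (the shortest path lies in $G_{k_j-1}$), giving $\delta_j(u,v) \le 2d_{T_i}(u_i,u) + d \le 2d+1$.

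The main obstacle is the ``all portals above $i$'' case, because there $u_i$ itself is not stored and the query must relay through a different cluster center; the cost of that detour is exactly the slack that is paid for by the ``$-4$'' guarantee of the four-portal chain combined with the $+8$ slack of the spanner, and getting these constants to line up ($2\cdot(-4)+8 = 0 \le 1$) is the only place where the specific distortion of the spanner and the number of portals interact. Once this accounting is set up, the remaining case analysis is routine symmetry.
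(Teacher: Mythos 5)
Your proposal matches the paper's construction and proof essentially line for line: the hybrid of the $8$-spanner, the $n^{1/3}$-clustering with the P\v{a}tra\c{s}cu--Roditty oracle on $G_\ell$, the four-portal chain with the strict-index tie-break giving the "$-4$" slack, the stored tree- and spanner-distances from each $u_i$, the query function, the case split on whether the shortest path survives in $G_\ell$, and the subcases on whether some portal of $u$ has index $\le i$. The accounting observation that $2\cdot(-4)+8=0$ is exactly why four portals pair with the $8$-spanner.

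One small imprecision to flag: you state $d_{T_i}(u_i,u) + d_{T_i}(u_i,v) \le d+1$ using "the edge from $z$ (or $t$) to $u_i$," but if $z\in C_i$ while $t\in C_{i'}$ for some $i'<i$, then $t$ is adjacent to $u_{i'}$, not necessarily to $u_i$; using $u_i$ on the $v$-side only gives $d+2$, which just barely overshoots $2d+1$ in the all-portals-above-$i$ case (the constants are tight). The paper instead bounds $d_{T_i}(u_i,u) + d_{T_{i'}}(u_{i'},v) \le d+1$ and argues that at least one of the two summands is $\le (d+1)/2$, handling the two possibilities symmetrically (one centred at $u_i$ near $u$, the other at $u_{i'}$ near $v$). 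With that adjustment your argument is exactly the paper's.
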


\subparagraph*{Acknowledgements.}

The author would like to thank Christian Sommer for helpful discussions on the application
of the $8$-spanner to his construction of distance oracles.

\newpage

\bibliographystyle{plain}
\bibliography{spanner}

%
%


\end{document}